\newtheorem{theorem}{Theorem}
\begin{document}

\title{General Representation of Nonlinear Green's Function for Second Order Differential Equations Nonlinear in the First Derivative}

\author[1]{Marco Frasca}

\author[2, 3]{Asatur Zh. Khurshudyan\footnote{Email: khurshudyan@mechins.sci.am}}

\affil[1]{\small Via Erasmo Gattamelata 3, 00176 Roma, Italy}
\affil[2]{\small Department on Dynamics of Deformable Systems and Coupled Fields, Institute of Mechanics, National Academy of Sciences of Armenia, Yerevan, Armenia}
\affil[3]{\small Institute of Natural Sciences, Shanghai Jiaotong University, Shanghai, China}

\date{}

\maketitle

\begin{abstract}
The Green's function method which has been originally proposed for linear systems has several extensions to the case of nonlinear equations. A recent extension has been proposed to deal with certain applications in quantum field theory. The general solution of second order nonlinear differential equations is represented in terms of a so-called short time expansion. The first term of the expansion has been shown to be an efficient approximation of the solution for small values of the state variable. The proceeding terms contribute to the error correction. This paper is devoted to extension of the short time expansion solution to non-linearities depending on the first derivative of the unknown function. Under a proper assumption on the nonlinear term, a general representation for Green's function is derived. It is also shown how the knowledge of nonlinear Green's function can be used to study the spectrum of the nonlinear operator. Particular cases and their numerical analysis support the advantage of the method. The technique we discuss grants to obtain a closed form analytic solution for non-homogeneous non-linear PDEs so far amenable just to numerical solutions. This opens up the possibility of several applications in physics and engineering.

{\bf Keywords}: generalized separation of variables; traveling waves; generalized Burgers' equation; nonlinear heat conduction; nonlinear wave equation; Burgers' equation
\end{abstract}

\section{Introduction}	

One of the most common methods of analysis of non-homogeneous \emph{linear} differential equations is the Green's function method. It allows to obtain an explicit representation for the solution to a boundary value problem knowing its Green's function. If the general solution of the problem
\[
\mathcal{D}\left[G\right] = \delta\left(\boldsymbol x\right) + \delta g, ~~ \boldsymbol x \in \Omega \subseteq \mathbb{R}^n,
\]
\[
\mathcal{B}\left[G\right] = 0, ~~ \boldsymbol x \in \partial\Omega,
\]
is known, then the general solution of the problem
\[
\mathcal{D}\left[w\right] = f\left(\boldsymbol x\right), ~~ \boldsymbol x \in \Omega \subseteq \mathbb{R}^n,
\]
\[
\mathcal{B}\left[w\right] = g ~~ {\rm on} ~~ \partial\Omega,
\]
is defined as the convolution of $G$ and non-homogeneous part of the latter problem:
\begin{equation}\label{Greenform}
w\left(\boldsymbol x\right) = \int_{\Omega} \left[ f\left(\boldsymbol \xi\right) + \delta g \right] G\left(\boldsymbol x - \boldsymbol \xi\right) {\rm d}\boldsymbol \xi.
\end{equation}
Here $\mathcal{D}\left[\cdot\right]$ and $\mathcal{B}\left[\cdot\right]$ are state and boundary linear operators respectively, $f$ and $g$ are given, $\delta g$ is linear in $g$ and denotes the inclusion of the non-homogeneity in boundary conditions to the right-hand side of the state equation (see \cite{Avetisyan2018} for details).

In the derivation of (\ref{Greenform}), the superposition principle is used, hence making it to be applicable merely to \emph{linear} equations. Nevertheless, the Green's function method has been extended to nonlinear equations by introducing the so-called backward and forward propagators \cite{Cacuci1988, Cacuci1989} and by using the short time expansion of the solution in terms of the nonlinear Green's function \cite{Frasca2007, Frasca2008} named in the following NLG method. Both extensions provide low-error approximation for the main types of nonlinear differential equations. However, for second order nonlinear differential equations the NLG method has a simpler representation formula. Recently, the NLG method has been modified in \cite{Khurshudyan2018, Khurshudyan2018AMP, Khurshudyan2018IJMPC} and new integrable cases have been established. Moreover, it has been shown that the NLG method can be efficiently combined with the method of generalized variables separation to approximate nonlinear partial differential equations. Using the NLG method, the controllability of oscillating nonlinear processes has been studied in \cite{Avetisyan2017, Avetisyan2018}.

In this article we consider new types of nonlinear partial differential equations that can be approximated using the NLG method. The principal difference of this paper with respect to \cite{Frasca2007, Frasca2008, Khurshudyan2018, Khurshudyan2018AMP, Khurshudyan2018IJMPC} is that here the nonlinear term depends on the first derivative of the unknown function. The idea is to extend this technique to larger classes of equations that have wide applications making it possible to find new solutions, in a closed analytic form, describing completely new physical regimes, mostly in presence of a source term.

For the NLG technique to be applicable, one should be able to solve the corresponding equation for the non-linear Green function. Anyway, a numerical algorithm can be otherwise devised from this approach to work. It is interesting to notice how a large set of well-known non-linear PDEs can be treated in this way obtaining both the explicit solutions and the spectrum of the non-linear operator. This makes the NLG method an innovative and powerful technique to attack and solve explicitly non-linear non-homogeneous PDEs. Our aims in this paper is to enlarge further the class of equations amenable to this treatment.

\section{The NLG method and Nonlinear Green's Functions for Particular Non-Linearities}

It has been established in \cite{Frasca2007, Frasca2008} that the solution of the second order nonlinear ODE
\begin{equation}\label{gennonlin}
\frac{d^2 w}{d t^2} + N\left(w, t\right) = f\left(t\right), ~~ t > 0,
\end{equation}
with a generic non-linearity $N$ and a given source function $f$, admits the following short time expansion of the general solution
\begin{equation}\label{shorttime}
w\left(t\right) = a_0 \int_0^t G\left(t - \tau\right) f\left(\tau\right) {\rm d}\tau + \sum_{k = 1}^\infty a_k \int_0^t \left(t - \tau\right)^k G\left(t - \tau\right) f\left(\tau\right) {\rm d}\tau,
\end{equation}
where unknowns $a_k$, $k = 0, 1, 2, \dots,$ are determined in terms of the quantities $w^{\left(k\right)}\left(0\right)$.

Here $G$ is the solution of the differential equation
\begin{equation}\label{Greeneq}
\frac{d^2 G}{d t^2} + N\left(G, t\right) = s \delta\left(t - \tau\right),~~ t > 0,~ \tau > 0,
\end{equation}
under corresponding Cauchy conditions, where $\delta$ is the Dirac distribution. Due to the similarity with the linear case, $G$ is referred to as the nonlinear Green's function of (\ref{gennonlin}).

We have also shown \cite{Frasca2007, Frasca2008, Khurshudyan2018, Khurshudyan2018AMP, Khurshudyan2018IJMPC} that the first order term in the short time expansion above, i.e.,
\begin{equation}\label{Frascaapprox}
w\left(t\right) \approx a_0 \int_0^t G\left(t - \tau\right) f\left(\tau\right) {\rm d}\tau,
\end{equation}
provides a numerical approximation for several partial differential equations consistent with the numerical solution obtained by the well-known method of lines. At this, $a_0$ and $s$ are real parameters that must be chosen to minimize the approximation error. Note that $s$ has been introduced in \cite{Frasca2007, Frasca2008}, while $a_0$ has been introduced in \cite{Khurshudyan2018, Khurshudyan2018AMP, Khurshudyan2018IJMPC}.

The exact determination of the nonlinear Green's function strongly depends on the form of $N$. Two particular cases have been considered in \cite{Frasca2007}. Specifically, it has been shown that the cubic non-linearity
\[
N\left(w, t\right) = w^3,
\]
admits the following exact solution of (\ref{Greeneq}):
\[
G\left(t\right) = 2^{\frac{1}{4}} \theta\left(t\right) \cdot \operatorname{sn}\left[\frac{t}{2^{\frac{1}{4}}},i\right],
\]
where $\theta$ is the Heaviside function, and $\operatorname{sn}$ is the Jacobi snoidal function defined as follows:
\[
\operatorname{sn} \left(\sigma, m \right) = \sin \varphi, ~~ {\rm where} ~~ \sigma = \int_0^\varphi \frac{{\rm d}\phi}{\sqrt{1 - m \sin^2\phi}}.
\]

On the other hand, the trigonometric non-linearity
\[
N\left(w, t\right) = \sin w
\]
leads to a Green's function of the form
\begin{equation}\label{eq:am}
G\left(t\right) = 2\theta\left(t\right) \cdot \operatorname{am}\left[\frac{t}{\sqrt{2}}, \sqrt{2}\right],
\end{equation}
where $\operatorname{am}$ is the Jacobi amplitude function defined as follows:
\[
\operatorname{am} \left(\sigma, m \right) = \varphi, ~~ {\rm where} ~~ \sigma = \int_0^\varphi \frac{{\rm d}\phi}{\sqrt{1 - m \sin^2\phi}}.
\]
Some new particular cases have been also considered recently in \cite{Khurshudyan2018, Khurshudyan2018AMP, Khurshudyan2018IJMPC}. In Tab.~\ref{tab1}, we present some of the known explicitly integrable cases so far.
\begin{table}[H]
\centering
\begin{tabular}{ c c }
 Nonlinear term & Green's function \\ \hline
 
 $w^2$ & $\displaystyle -\frac{1}{c} \theta\left(t\right) \wp\left(c t + c_1; 0, c_2\right)$ \\[0.5cm] \hline
 
 $w + w^3$ & $\displaystyle \sqrt{2 - c_1} i ~ \theta\left(t\right) \operatorname{sn}\left[ \frac{\sqrt{c_1}}{\sqrt{2}} \left| t + c_2 \right|, \frac{2 - c_1}{c_1} \right]$ \\[0.5cm] \hline
 
 $\displaystyle \frac{1}{w}$ & $\displaystyle c_1 \theta\left(t\right) \exp\left[ - \varphi^2\left(t; c_1, c_2\right) \right]$ \\[0.5cm] \hline
 
 $\displaystyle \frac{1}{w^3}$ & $\displaystyle \theta\left(t\right) \cdot \frac{1}{\sqrt{c_1}} \sqrt{c_1^2 \left( t + c_2 \right)^2 - 1}$ \\[0.5cm] \hline
 
 $\exp w$ & $\displaystyle \theta\left(t\right) \cdot \ln\left[\frac{1}{2}c_1\left(1 - \tanh^2\left[\frac{1}{2}\sqrt{c_1 \left(t + c_2\right)^2}\right]\right)\right]$ \\[0.5cm] \hline
 
 $\cos w$ & $\displaystyle 2\theta\left(t\right) \cdot \operatorname{am}\left[\frac{t}{\sqrt{2}}, \sqrt{2}\right]-\frac{\pi}{2}$ \\[0.5cm] \hline
 
 $\sinh w$ & $\displaystyle 2i ~ \theta\left(t\right) \cdot \operatorname{am}\left[c_1 \left|t + c_2\right|, \frac{1}{c_1}\right]$ \\[0.5cm] \hline
 
 $\cosh w$ & $\displaystyle 2i ~ \theta\left(t\right) \cdot \operatorname{am}\left[c_1 \left|t + c_2\right|, \frac{1}{c_1}\right]-i\frac{\pi}{2}$ \\[0.5cm] \hline
 
 $\displaystyle w \frac{d w}{dt}$ & $\displaystyle c_1 \theta\left(t\right) \cdot \tanh\left[\frac{1}{2}c_1\left(t + c_2\right)\right]$ \\[0.5cm] \hline
 
 $\displaystyle w \left(\frac{d w}{dt}\right)^2$ & $\displaystyle -\sqrt{2}i ~ \theta\left(t\right) \cdot \operatorname{erf}^{-1}\left[\sqrt{\frac{2}{\pi}} i ~ c_1 \left( t + c_2 \right) \right]$ \\[0.5cm] \hline
 
 $\displaystyle g\left(w\right) \left(\frac{d w}{d t}\right)^3$ & $\displaystyle G\left(t\right) = \theta\left(t\right) \cdot G_0\left(t\right), ~ \left[ \int_0^{G_0} \left( c_2 + \int_0^z g\left(\zeta\right) {\rm d}\zeta \right) ~ {\rm d}z \right]^{-1}\left(t + c_1\right)$ \\[0.5cm] \hline
 
 $\displaystyle g\left(w\right) = w^3$ & $\displaystyle \theta\left(t\right) \cdot \frac{t + c_2 - c_1 W\left(\psi\left(t; c_1, c_2\right)\right)}{c_1}$ \\[0.5cm] \hline
\end{tabular}
\caption{Explicit nonlinear Green's functions}
\label{tab1}
\end{table}
In Tab. \ref{tab1}, $c_1$ and $c_2$ are integration constants determined from homogeneous Cauchy conditions and
\begin{enumerate}
\item $c = \left(-6\right)^{-\frac{1}{3}}$, $\wp$ is the Weierstrass elliptic function \[
\wp\left(t; \omega_1, \omega_2\right) = \frac{1}{t^2} + \sum_{n^2 + m^2 \neq 0}\left[\frac{1}{\left(t + \omega_1 m + \omega_2 n\right)^2} - \frac{1}{\left(\omega_1 m + \omega_2 n\right)^2}\right],
\]
\item $\operatorname{erf}^{-1}$ is the inverse of the Gauss error function
\[
\operatorname{erf}\left(t\right) = \frac{2}{\sqrt{\pi}}\int_0^t \exp\left[-\tau^2\right] {\rm d}\tau.
\] \[
\varphi\left(t; c_1, c_2\right) = \operatorname{erf}^{-1}\left[-\sqrt{\frac{2}{\pi}} \left|c_1\right|\left|t + c_2\right|\right],
\]
\item $g$ is any function for which the inverse function of the integral exists \cite{Bartlett2018}, $W$ is the Lambert function and 
\[
\psi\left(t; c_1, c_2\right) = \frac{1}{c_1} \exp\left[\frac{t - \tau + c_2}{c_1}\right].
\]
\end{enumerate}
 
From Tab~\ref{tab1}, we can recognize several well-known non-linearities entering in a lot of physical problems, also in fundamental ones like the cubic non-linearity or the Liouville non-linearity due the exponential. In both cases we are non in a position to provide an explicit solution for the non-homogeneous PDEs. This aspect is fundamental to treat these problems in quantum field theory in an explicit way (see e.g. \cite{Frasca:2005sx}).

\section{Green's functions and spectra}

The knowledge of the Green's function for some of these nonlinear equations can yield a precise information on their spectral properties. In fact, the poles of the Green's function yields directly the spectrum provided we can consider possible any further correction as a higher order effects. This is generally a quite acceptable condition. For some of the linearities considered in this paper, some results were obtained in \cite{Frasca2007}.

The technique can be worked out as follows. Let us consider the case with $w^3$. We have
\[
G\left(t\right) = 2^{\frac{1}{4}} \theta\left(t\right) \cdot \operatorname{sn}\left[\frac{t}{2^{\frac{1}{4}}}, i\right].
\]
This solution admits a Fourier series given by
\[
\operatorname{sn}\left(u\right) = \frac{2\pi}{i K\left(-1\right)} \sum_{n = 0}^\infty \left(-1\right)^n \frac{\exp\left[-\left( n + \frac{1}{2} \right) \pi\right]}{1 + \exp\left[-\left(2n + 1\right)\pi\right]} \sin\left(\frac{\pi}{2 K\left(-1\right)} \left(2n + 1\right) u\right)
\]
being $K(-1)$ the complete elliptic integral of the first kind. This means that
\[
G\left(t\right) = 2^{\frac{1}{4}} \theta\left(t\right) \cdot
\frac{2 \pi}{i K\left(-1\right)} \sum_{n = 0}^\infty \left(-1\right)^n \frac{\exp\left[-\left( n + \frac{1}{2} \right) \pi\right]}{1 + \exp\left[-\left(2n + 1\right)\pi\right]} \sin\left(\frac{\pi}{2 K\left(-1\right)}\left(2n + 1\right) \frac{t}{2^{\frac{1}{4}}}\right),
\]
and we can identify a set of frequencies
\[
\omega_n = \frac{\pi}{2 K\left(-1\right)} \left(2n + 1\right) \frac{1}{2^{\frac{1}{4}}}
\]
that yield the leading contribution to the quantum spectrum of the theory that is the well-known $\phi^4$. This spectrum was firstly obtained with these techniques in \cite{Frasca:2005sx}.

In a similar way, one has from eq.~(\ref{eq:am}) that the spectrum has the peculiar form
\[
\omega_n=\frac{n\pi}{\sqrt{2}K(\sqrt{2})}
\]
and presents both a zero mode and unstable modes because $K(\sqrt{2})$ has also an imaginary part. This is the well-known sine-Gordon model. It has been discussed in \cite{Frasca2007}.

We can try to extend this to some other cases given in Tab.~\ref{tab1}. The only one amenable to a Fourier analysis is the $w^2$ case. One has
\begin{equation}\label{eq:wp}
G\left(t\right) = -\frac{1}{c} \theta\left(t\right) \wp\left(c t + c_1; 0, c_2\right)
\end{equation}
where $\wp(u;\omega_1,\omega_2)$ is the Weierstrass elliptic function. The Fourier series is given by \cite[Eq.~23.8.2]{DLMF}
\begin{equation}
    \wp\left(z\right) = -\frac{\eta_1}{\omega_1} + \frac{\pi^2}{4 \omega_1^2} \csc^2\left(\frac{\pi z}{2\omega_1}\right)
    - \frac{2\pi^2}{\omega_1^2} \sum_{n = 1}^\infty \frac{n \exp\left[i 2n \pi \frac{\omega_2}{\omega_1}\right]}{1 - \exp\left[i 2n \pi \frac{\omega_2}{\omega_1}\right]} \cos\left(\frac{n \pi z}{\omega_1}\right),
\end{equation}
where $\omega_1$ and $\omega_2$ are the periods of the Weierstrass function with the former real and the latter complex with a positive imaginary part. We have also
\begin{equation}
    \eta_1 = \frac{\pi^2}{2\omega_1} \left(\frac{1}{6} + \sum_{n = 1}^\infty \csc^2\left(\frac{n\pi\omega_2}{\omega_1}\right)\right).
\end{equation}
From eq.~(\ref{eq:wp}) we recognize that the parameter in the Weierstrass function yields the set of equations to solve to obtain the periods $\omega_1$ and $\omega_2$
\[
0 = \sum_{\left(n, m\right) \ne (0,0)} \frac{1}{\left(m \omega_1 + m \omega_2\right)^4}
\qquad
c_2 = \sum_{\left(n, m\right) \ne (0,0)} \frac{1}{\left(m \omega_1 + m \omega_2\right)^6}.
\]
Therefore, for $\omega_1 \in \mathbb{R}$ we get a discrete spectrum
\[
\epsilon_n = \frac{n \pi c}{\omega_1}
\]
with $n = 0$ excluded.

Other cases in Tab.~\ref{tab1} do not have a discrete spectrum.

\section{Related PDEs of Physical Interest and Generalized Variable Separation}

Using the generalized separation of variables \cite{Polyanin2012}, it is possible to show that the nonlinear PDEs of the form
\begin{equation}\label{nonlinwave}
\frac{\partial^\alpha \tilde{w}}{\partial t^\alpha} = \frac{\partial^2 \tilde{w}}{\partial x^2} + \tilde{N}\left(\frac{\partial \tilde{w}}{\partial x}, \tilde{w}, x, t\right), ~~ \alpha = 1, 2,
\end{equation}
are reduced to nonlinear ODEs of the form (\ref{gennonlin}) with a derivative dependent potential. Therefore, the modified NLG method can be applied for their approximation. Eq.~(\ref{nonlinwave}) arises in various areas of science including gravity, quantum field theory, engineering and fluid mechanics (describing, as a rule, nonlinear wave phenomena in solids or fluids \cite{Drumheller1998}), biology \cite{Murray2002}, and many others. Particular cases include the Allen–Cahn, generalized Burgers, Duffing–van der Pol, Emden, Fisher, Kakutani-Kawahara, Scr{\"o}dinger and other second order nonlinear equations.

To demonstrate the reduction procedure, let us consider the one-dimensional nonlinear wave equation
\[
\frac{\partial^2 \tilde{w}}{\partial t^2} = \alpha \frac{\partial}{\partial x}\left[\exp\left[\lambda x\right] \frac{\partial \tilde{w}}{\partial x}\right] + \tilde{N}\left(\frac{\partial \tilde{w}}{\partial x}, \tilde{w}, x, t\right),
\]
with a general non-linearity $\tilde{N}$ and arbitrary real parameters $\alpha$ and $\lambda$. This equation allows separation of variables \cite{Polyanin2012}
\begin{equation}\label{gensep}
\chi^2 = a_1\left[ \frac{\exp\left[- \lambda x\right]}{\alpha \lambda^2} - \frac{\left(t + a_2\right)^2}{4} \right],
\end{equation}
with arbitrary constants $a_1$ and $a_2$. The reduced ODE reads as
\[
\frac{d^2 w}{d \chi^2} + \frac{4}{a_1} N\left(\frac{d w}{d \chi}, w, \chi\right) = 0.
\]
Here the symbols without tilde denote the corresponding quantities in the new variable $\chi$.

Furthermore, the additive separation of variables
\[
\tilde{w}\left(x, t\right) = \psi\left(x\right) + \varphi\left(t\right)
\]
reduces the nonlinear wave equation
\[
\frac{\partial^2 \tilde{w}}{\partial t^2} = \frac{\partial^2 \tilde{w}}{\partial x^2} + N_x\left(\frac{\partial \tilde{w}}{\partial x}, x\right) + N_t\left(\frac{\partial \tilde{w}}{\partial t}, t\right)
\]
with generic non-linearities $N_x$ and $N_t$ to the following system of two uncoupled ODEs \cite{Polyanin2012}:
\begin{equation}\label{reducedwave}
\frac{d^2 \psi}{d x^2} + N_x\left(\frac{d \psi}{d x}, x\right) = C, ~~ 
\frac{d^2 \varphi}{d t^2} - N_t\left(\frac{d \varphi}{d t}, t\right) = C,
\end{equation}
where $C$ is an arbitrary constant.

\section{Representation of the nonlinear Green's function for PDE with a non-linearity in the first derivative}

In this section we address the problem of representation of the nonlinear Green's function of second order differential equations of the form
\begin{equation}\label{nonlinODE}
    \frac{d^2 w}{d t^2} + N\left(\frac{d w}{d t}, w, t\right) = f\left(t\right), ~~ t > 0,
\end{equation}
complemented by some Cauchy conditions, where $f$ is a given function. As we saw in the previous section, quite general classes of PDEs can be reduced to (\ref{nonlinODE}). For the sake of simplicity, we limit the consideration to the one-dimensional case. 

General representation of the nonlinear Green's function in the case when
\[
N = N\left(w, t\right)
\]
has been studied in \cite{Frasca2018}. Recall its main result.

\begin{theorem}[\cite{Frasca2018}]
Assume that
\begin{equation}\label{Ncond}
N\left(\theta \cdot w, t\right) = \theta\left(t\right) \cdot N\left(w, t\right)
\end{equation}
holds. Then, the nonlinear Green's function of (\ref{gennonlin}) admits the following representation:
\[
G\left(t\right) = \theta\left(t\right) w_0\left(t\right),
\]
where $w_0$ is the general solution of the following Cauchy problem:
\[
\frac{d^2 w_0}{d t^2} + N\left(w_0, t\right) = 0, ~~ t > 0,
\]
\[
w_0\left(0\right) = 0, ~~ \frac{d w_0}{d t}\bigg|_{t = 0} = s.
\]
\end{theorem}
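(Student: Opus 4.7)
My plan is to verify the ansatz $G(t)=\theta(t)\,w_0(t)$ directly by substitution into the Green's function equation (\ref{Greeneq}) with the source located at $\tau=0$, computing all derivatives in the sense of distributions. First I would differentiate once: $G'(t)=\delta(t)w_0(t)+\theta(t)w_0'(t)$. The first term is the distribution $w_0(0)\,\delta(t)$, which vanishes by the initial condition $w_0(0)=0$. Differentiating again gives $G''(t)=\delta(t)w_0'(t)+\theta(t)w_0''(t)=s\,\delta(t)+\theta(t)w_0''(t)$, where the second initial condition $w_0'(0)=s$ has been used to evaluate the distributional term.

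Next I would handle the nonlinear part. This is where assumption (\ref{Ncond}) does the essential work: it asserts that $N$ is compatible with composition against the Heaviside factor, so that $N(G,t)=N(\theta(t)w_0(t),t)=\theta(t)N(w_0(t),t)$. Putting the two pieces together,
\[
\frac{d^2 G}{dt^2}+N(G,t)=s\,\delta(t)+\theta(t)\bigl[w_0''(t)+N(w_0(t),t)\bigr],
\]
and the bracketed term vanishes identically on $t>0$ because $w_0$ is a solution of the homogeneous Cauchy problem in the statement. Hence $G''+N(G,t)=s\,\delta(t)$, which is exactly equation (\ref{Greeneq}) with $\tau=0$; translation invariance handles the general $\tau$ case. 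The causal Cauchy data at $t=0^-$ are trivially satisfied because of the $\theta(t)$ factor, so $G$ is indeed the nonlinear Green's function.

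The only subtle step, and the one worth dwelling on, is the distributional identity $\delta(t)w_0(t)=0$; this rests on $w_0$ being continuous at the origin together with $w_0(0)=0$, so that the standard product $\varphi(t)\delta(t)=\varphi(0)\delta(t)$ is well defined. The second subtlety is the meaning of $N(\theta w_0,t)$: in general $\theta w_0$ is not a smooth function, so hypothesis (\ref{Ncond}) should be read as the requirement that makes this composition a well-posed distribution that additionally equals $\theta\,N(w_0,t)$. I would remark that (\ref{Ncond}) implicitly forces $N(0,t)=0$ (evaluate at any $t<0$), which is the minimal compatibility condition ensuring that the Heaviside jump does not generate spurious singular contributions from the nonlinearity. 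Beyond this observation, no further obstacle arises; the argument is essentially a distributional chain of identities.
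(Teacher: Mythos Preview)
Your argument is correct and is essentially the same as the paper's. Note that the paper does not actually prove this statement---it is quoted from \cite{Frasca2018}---but the paper's proof of Theorem~\ref{theorderiv} (the first-derivative generalization) specializes exactly to your computation: the same distributional identities for $(\theta w_0)'$ and $(\theta w_0)''$, the same use of the initial data to kill the $\delta'$ term and produce the $s\,\delta$ source, and the same invocation of the multiplicativity hypothesis to pull $\theta$ through $N$. The only cosmetic difference is that the paper starts from the homogeneous equation and multiplies by $\theta$, while you start from $G=\theta w_0$ and differentiate; the content is identical.
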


In other words, if the non-linearity possesses the multiplicativity property (\ref{Ncond}), then the nonlinear Green's is represented in terms of the homogeneous solution. This is an important result from applications point of view, since it allows to construct Green's function of a non-homogeneous equation by means of the general solution of its homogeneous part. It is noteworthy that the most part of handbooks containing exact solutions of nonlinear differential equations (see, for instance, \cite{Polyanin2017, Polyanin2012}) provide with their homogeneous solutions. At this, solution of nonlinear non-homogeneous equations is a very challenging topic in modern theory of nonlinear differential equations.

Then, the main result of this paper is the following theorem.
\begin{theorem}\label{theorderiv}
Let us consider a non-linear PDE with non-lineatiries in the first derivatives. If the following multiplicativity relation holds:
\begin{equation}\label{Ncondderiv}
N\left(\theta \cdot \frac{d w}{d t}, \theta \cdot w, t\right) = \theta\left(t\right) \cdot N\left(\frac{d w}{d t}, w, t\right),
\end{equation}
then, the nonlinear Green's function of (\ref{nonlinODE}) admits the following representation:
\begin{equation}\label{nonlinGreen}
    G\left(t\right) = \theta\left(t\right) w_0\left(t\right)
\end{equation}
where $w_0$ is the general solution of the following \emph{homogeneous} Cauchy problem:
\begin{equation}\label{nonlinODEhom}
\frac{d^2 w_0}{d t^2} + N\left(\frac{d w_0}{d t}, w_0, t\right) = 0, ~~ t > 0,
\end{equation}
\begin{equation}\label{nonhomCauchy}
w_0\left(0\right) = 0, ~~ \frac{d w_0}{d t}\bigg|_{t = 0} = s.
\end{equation}
\end{theorem}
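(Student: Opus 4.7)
The plan is to substitute the ansatz $G(t)=\theta(t)\,w_0(t)$ directly into the defining equation for the nonlinear Green's function, namely
\[
\frac{d^2 G}{d t^2}+N\!\left(\frac{dG}{dt},G,t\right)=s\,\delta(t),
\]
(taking $\tau=0$ without loss of generality) and to check that each side matches as a distribution. The homogeneous Cauchy data $w_0(0)=0$, $w_0'(0)=s$ are tailored precisely to make the singular contribution on the left-hand side reduce to $s\delta(t)$, while the multiplicativity hypothesis (\ref{Ncondderiv}) is what makes the nonlinear term factor through the Heaviside cutoff.

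First I would compute the distributional derivatives of $G=\theta w_0$. Using the Leibniz rule and the identity $\phi(t)\delta(t)=\phi(0)\delta(t)$, the condition $w_0(0)=0$ gives
\[
\frac{dG}{dt}=\delta(t)\,w_0(t)+\theta(t)\,w_0'(t)=\theta(t)\,w_0'(t),
\]
and then, using $w_0'(0)=s$,
\[
\frac{d^2 G}{d t^2}=\delta(t)\,w_0'(t)+\theta(t)\,w_0''(t)=s\,\delta(t)+\theta(t)\,w_0''(t).
\]
Thus the linear part of the equation already produces the desired $s\delta(t)$ singularity, plus a smooth piece supported on $t>0$.

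Next I would handle the nonlinear term. Because $dG/dt=\theta\cdot w_0'$ and $G=\theta\cdot w_0$, the multiplicativity hypothesis (\ref{Ncondderiv}) applies verbatim and yields
\[
N\!\left(\frac{dG}{dt},G,t\right)=N(\theta\cdot w_0',\theta\cdot w_0,t)=\theta(t)\,N(w_0',w_0,t).
\]
Adding the two contributions gives
\[
\frac{d^2 G}{d t^2}+N\!\left(\frac{dG}{dt},G,t\right)=s\,\delta(t)+\theta(t)\!\left[w_0''+N(w_0',w_0,t)\right],
\]
and the bracket vanishes by the homogeneous equation (\ref{nonlinODEhom}). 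This establishes (\ref{nonlinGreen}); the general $\tau>0$ case follows by translation invariance of the argument.

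The main obstacle I anticipate is the rigorous handling of the nonlinear term on distributions: $N$ is not a linear operator, so composition with the distribution $\theta\cdot w_0'$ is not a priori well-defined. Strictly speaking, (\ref{Ncondderiv}) must be interpreted as a pointwise identity valid almost everywhere for locally integrable representatives, which is natural since $\theta\cdot w_0'$ is a genuine function (bounded near $0$ because $w_0'$ is continuous there). Once one commits to this pointwise interpretation, the rest of the derivation is essentially the same bookkeeping as in the $N=N(w,t)$ case of Theorem 1, with the extra check that the derivative $dG/dt$ picks up no Dirac contribution thanks to $w_0(0)=0$. I would also remark that condition (\ref{Ncondderiv}) is automatically satisfied whenever $N$ depends on its first two arguments only through monomials or smooth homogeneous functions that absorb the idempotent $\theta$, which covers all the physically relevant non-linearities listed in Tab.~\ref{tab1}.
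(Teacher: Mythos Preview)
Your proof is correct and follows essentially the same route as the paper: both arguments compute the distributional derivatives of $\theta w_0$, use the Cauchy data to kill the $\delta'$ term and produce the $s\delta$ term, invoke the multiplicativity hypothesis (\ref{Ncondderiv}) to factor $\theta$ through $N$, and then appeal to the homogeneous equation to cancel the remaining smooth piece. The only cosmetic difference is that the paper starts from (\ref{nonlinODEhom}), multiplies by $\theta$, and rewrites, whereas you substitute the ansatz directly into the Green's function equation; the underlying computation is identical.
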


\begin{proof}

Consider the Cauchy problem (\ref{nonlinODEhom}), (\ref{nonhomCauchy}). Multiplying both sides of (\ref{nonlinODEhom}) by $\theta$ and making use of (\ref{Ncondderiv}), we derive
\[
\theta\left(t\right) \frac{d^2 w_0}{d t^2} + N\left(\theta \cdot \frac{d w_0}{d t}, \theta \cdot w_0, t\right) = 0.
\]
On the other hand, in the sense of distributions we have
\[
\frac{d \theta}{d t} = \delta\left(t\right),
\]
\[
\frac{d \left(\theta \cdot w_0\right)}{d t} = w_0\left(0\right) \delta\left(t\right) + \theta\left(t\right) \cdot \frac{d w_0}{d t},
\]
\[
\frac{d^2 \left(\theta \cdot w_0\right)}{d t^2} = w_0\left(0\right) \delta'\left(t\right) + \frac{d w_0}{d t}\bigg|_{t = 0} \cdot \delta\left(t\right) + \theta\left(t\right) \cdot \frac{d^2 w_0}{d t^2}.
\]
By virtue of (\ref{nonhomCauchy}) we have that
\[
\theta\left(t\right) \cdot \frac{d w_0}{d t} = \frac{d \left(\theta \cdot w_0\right)}{d t}, ~~ \theta\left(t\right) \cdot \frac{d^2 w_0}{d t^2} = \frac{d^2 \left(\theta \cdot w_0\right)}{d t^2} - s \delta\left(t\right).
\]
Substituting these into (\ref{nonlinODEhom}), we obtain that the function $G$ (\ref{nonlinGreen}) satisfies
\[
\frac{d^2 G}{d t^2} + N\left(\frac{d G}{d t}, G, t\right) = s \delta\left(t\right).
\]
Thus, it is the nonlinear Green's function of (\ref{nonlinODE}).

\end{proof}

The particular solutions obtained in \cite{Frasca2018} for the case (\ref{Ncond}) and all reasonable combinations of them can be involved in this case as well.

\section{Applications}

In this section we consider some new non-linearities that allow to derive the corresponding Green's function explicitly in the form of (\ref{nonlinGreen}).

\subsection{Generalized Burgers' Equation}

Consider the generalized Burgers' equation
\[
\frac{\partial \tilde{w}}{\partial t} = \frac{\partial^2 \tilde{w}}{\partial x^2} - \tilde{w} \left(\frac{\partial \tilde{w}}{\partial t}\right)^2.
\]
This equation admits a traveling wave solution $\tilde{w}\left(x, t\right) = w\left(x - v t\right) := w\left(\chi\right)$, $v = const$, determined from the second order ODE
\begin{equation}\label{reducedBurg}
\frac{d^2 w}{d \chi^2} + v \frac{d w}{d \chi} - w \left(\frac{d w}{d \chi}\right)^2 = 0.
\end{equation}

Obviously, the nonlinear term
\[
N\left(\frac{d w}{d \chi}, w, \chi\right) = v \frac{d w}{d \chi} - w \left(\frac{d w}{d \chi}\right)^2
\]
satisfies (\ref{Ncondderiv}). The general solution of (\ref{reducedBurg}), (\ref{nonhomCauchy}) is found explicitly as follows:
\[
w_0\left(\chi\right) = \sqrt{2} \operatorname{erf}^{-1}\left[ \sqrt{\frac{2}{\pi}} \frac{s}{v} \left[1 - \exp \left(- v \chi\right) \right] \right].
\]
Therefore, according to Theorem \ref{theorderiv}, the nonlinear Green's function is found to have the following form:
\[
\tilde{G}\left(x, t\right) = \theta\left(x - v t\right) w_0\left(x - v t\right) = \sqrt{2} ~ \theta\left(x - v t\right) \cdot \operatorname{erf}^{-1}\left[ \sqrt{\frac{2}{\pi}} \frac{s}{v} \left[1 - \exp \left(- v x + v^2 t\right) \right] \right],
\]
and its traveling wave solution is represented in the form of (\ref{shorttime}).

\subsection{Nonlinear heat equation with power temperature gradient}

The nonlinear heat equation with nonlinear temperature gradient
\begin{equation}\label{heatPDE}
\frac{\partial \tilde{w}}{\partial t} = \frac{\partial^2 \tilde{w}}{\partial x^2} + \left(\frac{\partial \tilde{w}}{\partial x}\right)^n, ~~ n = 2, 3, 4, \dots,
\end{equation}
admits the traveling wave solution $\tilde{w}\left(x, t\right) = w\left(x - v t\right) := w\left(\chi\right)$, $v = const$, determined from
\begin{equation}\label{heatred}
\frac{d^2 w}{d \chi^2} + v \frac{d w}{d \chi} + \left(\frac{d w}{d \chi}\right)^n = 0.
\end{equation}

In this case also the nonlinear term
\[
N\left(\frac{d w}{d \chi}, w, \chi\right) = v \frac{d w}{d \chi} + \left(\frac{d w}{d \chi}\right)^n
\]
satisfies (\ref{Ncondderiv}). The general solution of (\ref{heatred}), (\ref{nonhomCauchy}) reads as follows:
\[
w_0\left(\chi\right) = H\left(s\right) - H\left(G_0\left(\chi\right)\right)
\]
where
\[
H\left(u\right) = \frac{u}{s} \cdot ~ _2F_1\left(1, \frac{1}{n - 1}; \frac{n}{n - 1}; - \frac{u^{n - 1}}{v} \right),
\]
$_2F_1$ is the Gauss hypergeometric function,
\[
G_0\left(\chi\right) = g^{-1}\left(g\left(s\right) - \chi\right), ~~ g\left(u\right) = \frac{1}{\left(n - 1\right) v} \ln\frac{1}{1 + v u^{1 - n}},
\]
and the superscript $-1$ denotes the inverse function.

Then, according to Theorem \ref{theorderiv}, the nonlinear Green's function of (\ref{heatPDE}) is found explicitly as follows:
\[
G\left(x, t\right) = \theta\left(x - v t\right) w_0\left(x - v t\right) = \theta\left(x - v t\right) \left[ H\left(s\right) - H\left(G_0\left(x - v t\right)\right) \right].
\]

In particular, if $n = 2$, we derive
\[
\tilde{G}\left(x, t\right) = \theta\left(x - v t\right) \left[ - v x + v^2 t + \ln\left[ \frac{s + v}{s} \exp\left(x - v t\right) - 1\right] \right].
\]

\subsection{Wave Equation with Nonlinear Damping}

It is easy to verify that the wave equation with nonlinear damping
\begin{equation}\label{wavePDE}
\frac{\partial^2 \tilde{w}}{\partial t^2} + \alpha \left(\frac{\partial \tilde{w}}{\partial t}\right)^n = c^2 \frac{\partial^2 \tilde{w}}{\partial x^2}, ~~ n = 3, 4, \dots,
\end{equation}
admits the traveling wave solution $\tilde{w}\left(x, t\right) = w\left(x - v t\right) = w\left(\chi\right)$, $v = const$, satisfying
\begin{equation}\label{wavered}
a \frac{d^2 w}{d \chi^2} + \alpha \left(\frac{d w}{d \chi}\right)^n = 0,
\end{equation}
where $a = c^2 - v^2$, $\alpha > 0$.

In this case also, the nonlinear term satisfies (\ref{Ncondderiv}). The general solution of (\ref{wavered}), (\ref{nonhomCauchy}) reads as follows:
\[
w_0\left(\chi\right) = \frac{a}{\alpha \left(n - 2\right)} \left[ -s^{2 - n} + \left( s^{1 - n} + \frac{\alpha \left(n - 1\right)}{a} \chi \right)^{1 + \frac{1}{1 - n}} \right].
\]
Therefore, according to Theorem \ref{theorderiv}, the nonlinear Green's function of (\ref{wavePDE})is determined as follows:
\[
\tilde{G}\left(x, t\right) = \frac{a}{\alpha \left(n - 2\right)} \theta\left(x - v t\right) \left[ -s^{2 - n} + \left( s^{1 - n} + \frac{\alpha \left(n - 1\right)}{a} \left(x - v t\right) \right)^{1 + \frac{1}{1 - n}} \right].
\]

\section{Numerical Quantification of the Approximation Error: Generalized Burgers' Equation}

Now we consider some specific examples of nonlinear PDEs borrowed from existing references. The numerical error in computations is quantified by means of the logarithmic error function
\[
\operatorname{Er}\left(\chi\right) = \log_{10}\left|w_{\rm Green's}\left(\chi\right) - w_{\rm MoL}\left(\chi\right)\right|,
\]
measuring the mismatch between the nonlinear Green's solution $w_{\rm Green's}$ and the numerical solution $w_{\rm MoL}$ obtained by means of the method of lines in degrees of $10$. For the sake of simplicity, we restrict the consideration by the first order approximation (\ref{Frascaapprox}). The numerical error is minimized with respect to $s_1$ and $s_2$ simultaneously. The error can be reduced further by adding more terms of the short time expansion (\ref{shorttime}).

In order to test the proposed approach, we consider the numerical solution of (\ref{reducedBurg}) for different source functions $f$. Figs. \ref{fig1}--\ref{fig6}, as well as Tab. \ref{tab2} show that the approximate solution is efficiently close to the exact one. In Tab. \ref{tab2} we also bring the values of the scale parameters above.

\begin{figure}[H]
\centerline{\includegraphics[width = 3in]{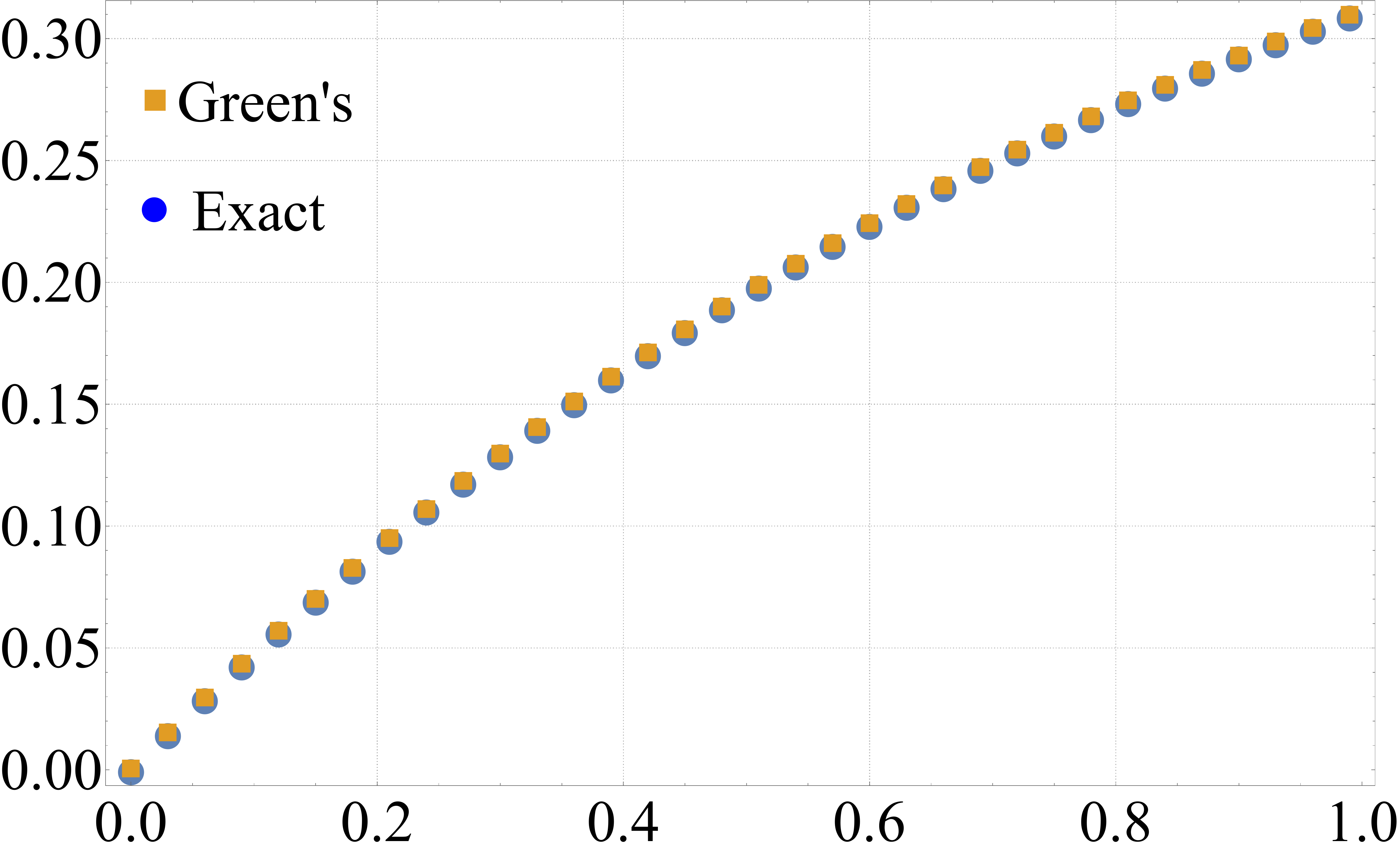} ~ \includegraphics[width=3.05in]{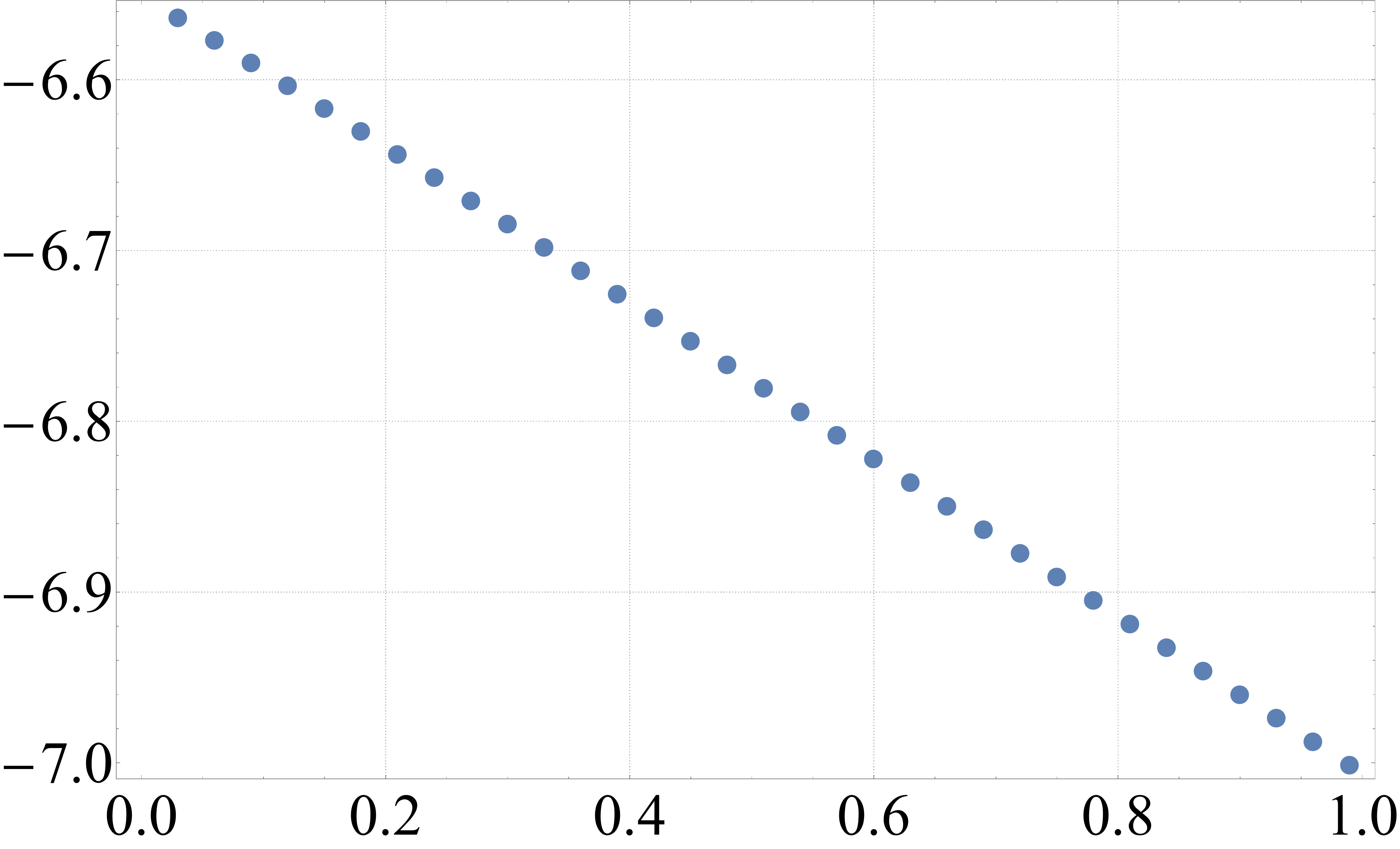}}
\vspace*{8pt}
\caption{Discrete plot of exact and approximate solutions (left) and $\operatorname{Er}$ (right) for $f\left(\chi\right) = \delta\left(\chi\right)$: generalized Burgers' equation}
\label{fig1}
\end{figure}

\begin{figure}[H]
\centerline{\includegraphics[width = 3in]{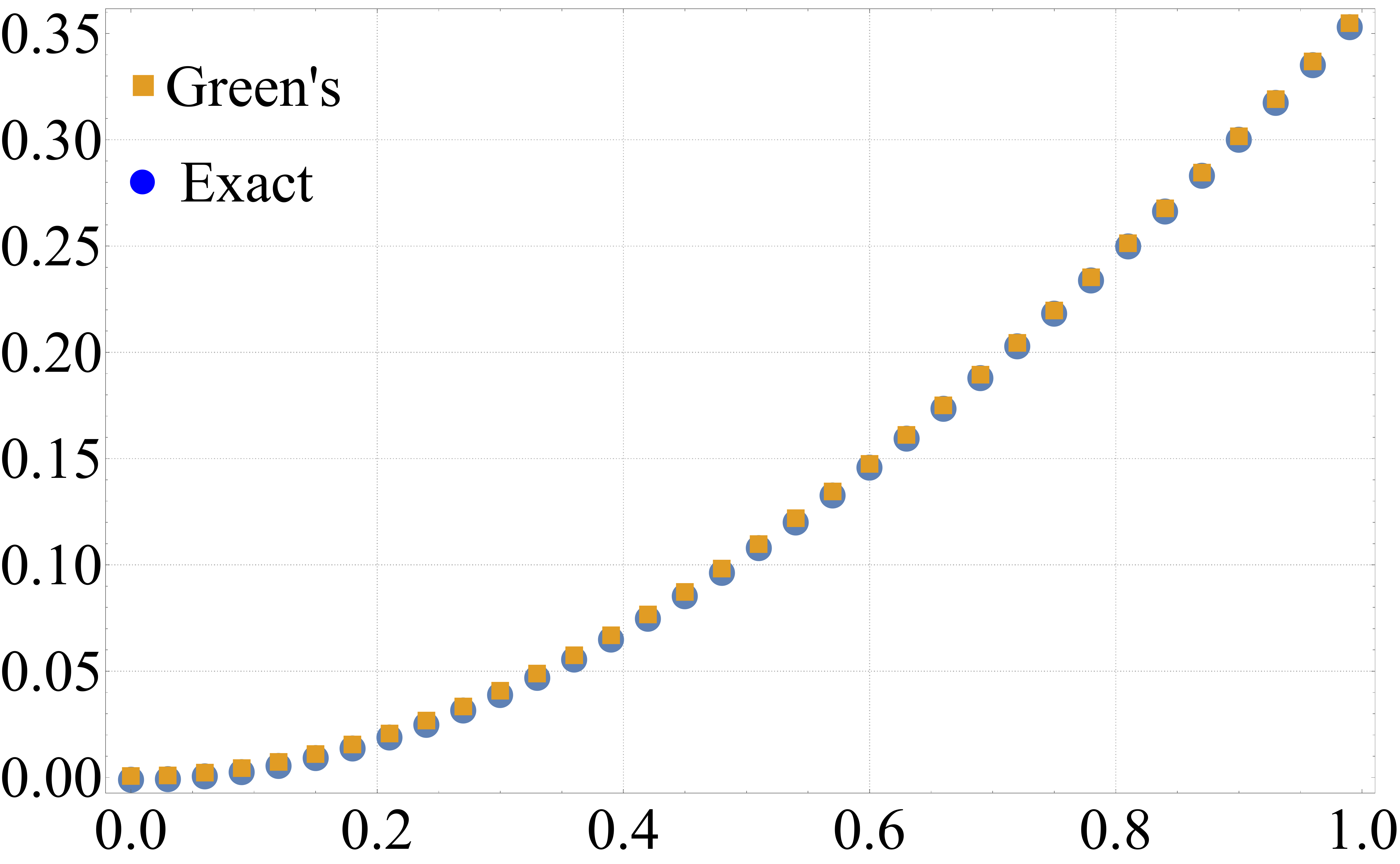} ~ \includegraphics[width=3in]{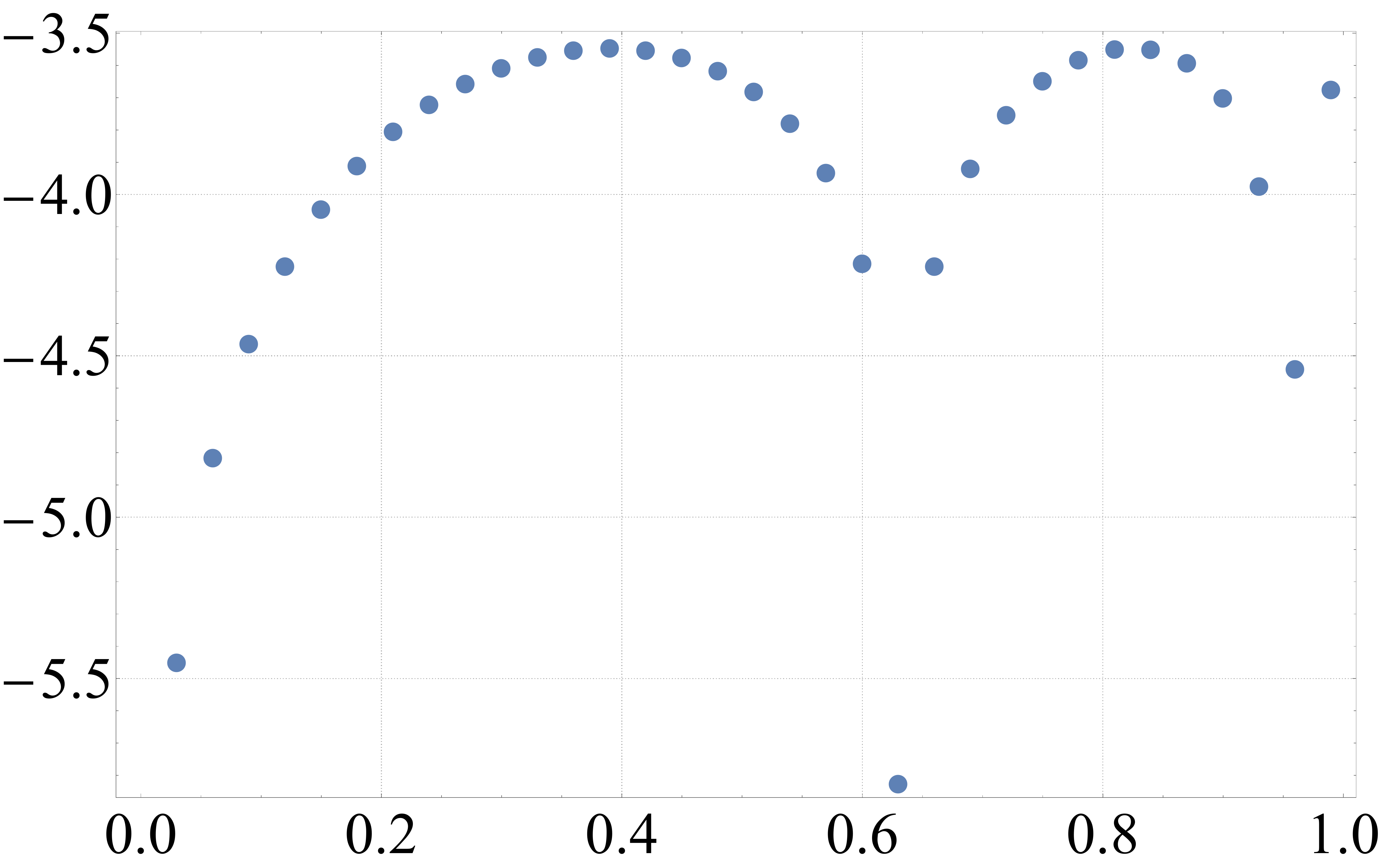}}
\vspace*{8pt}
\caption{Discrete plot of exact and approximate solutions (left) and $\operatorname{Er}$ (right) for $f\left(\chi\right) = \theta\left(\chi\right)$: generalized Burgers' equation}
\label{fig2}
\end{figure}

\begin{figure}[H]
\centerline{\includegraphics[width = 3in]{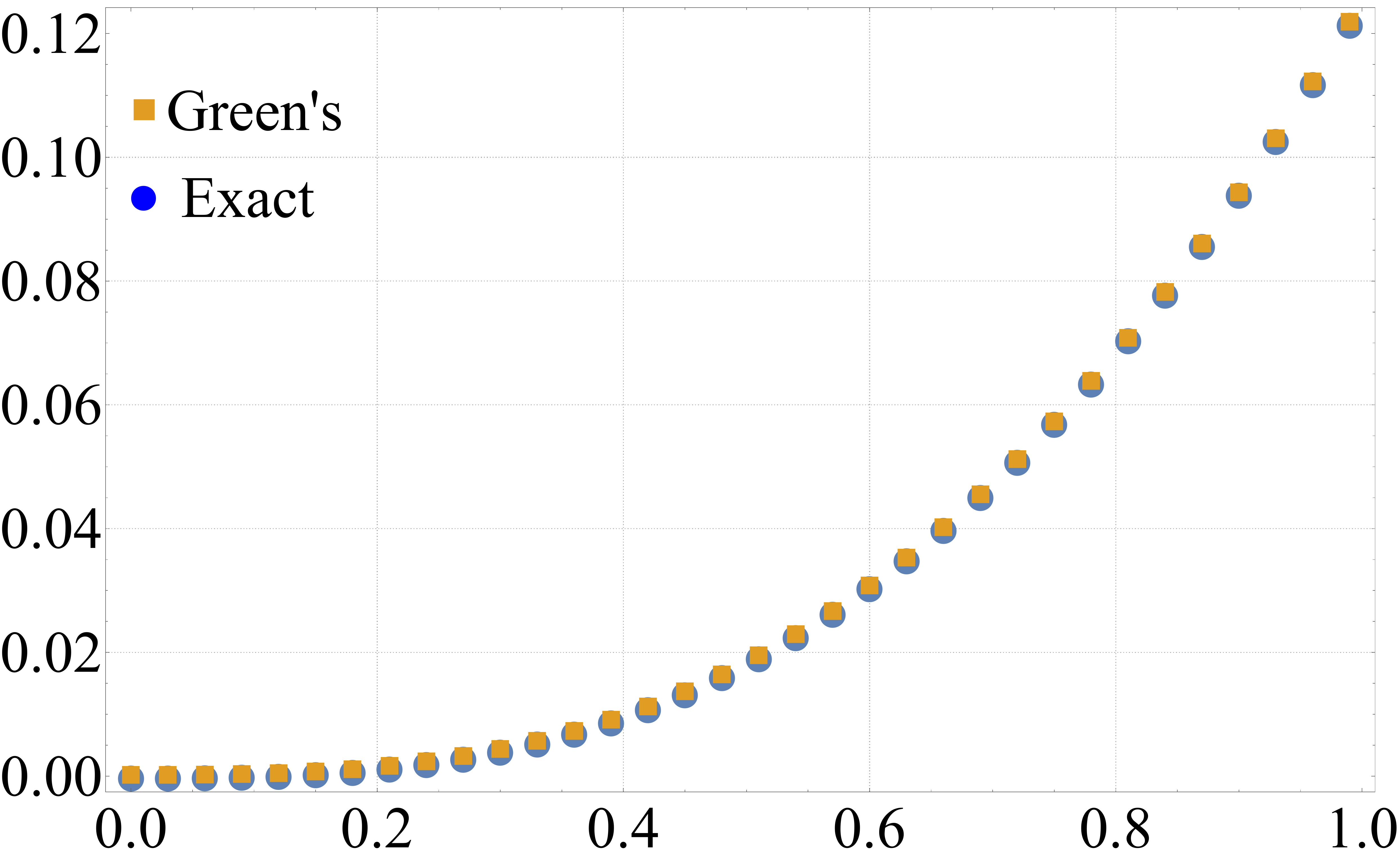} ~ \includegraphics[width=3in]{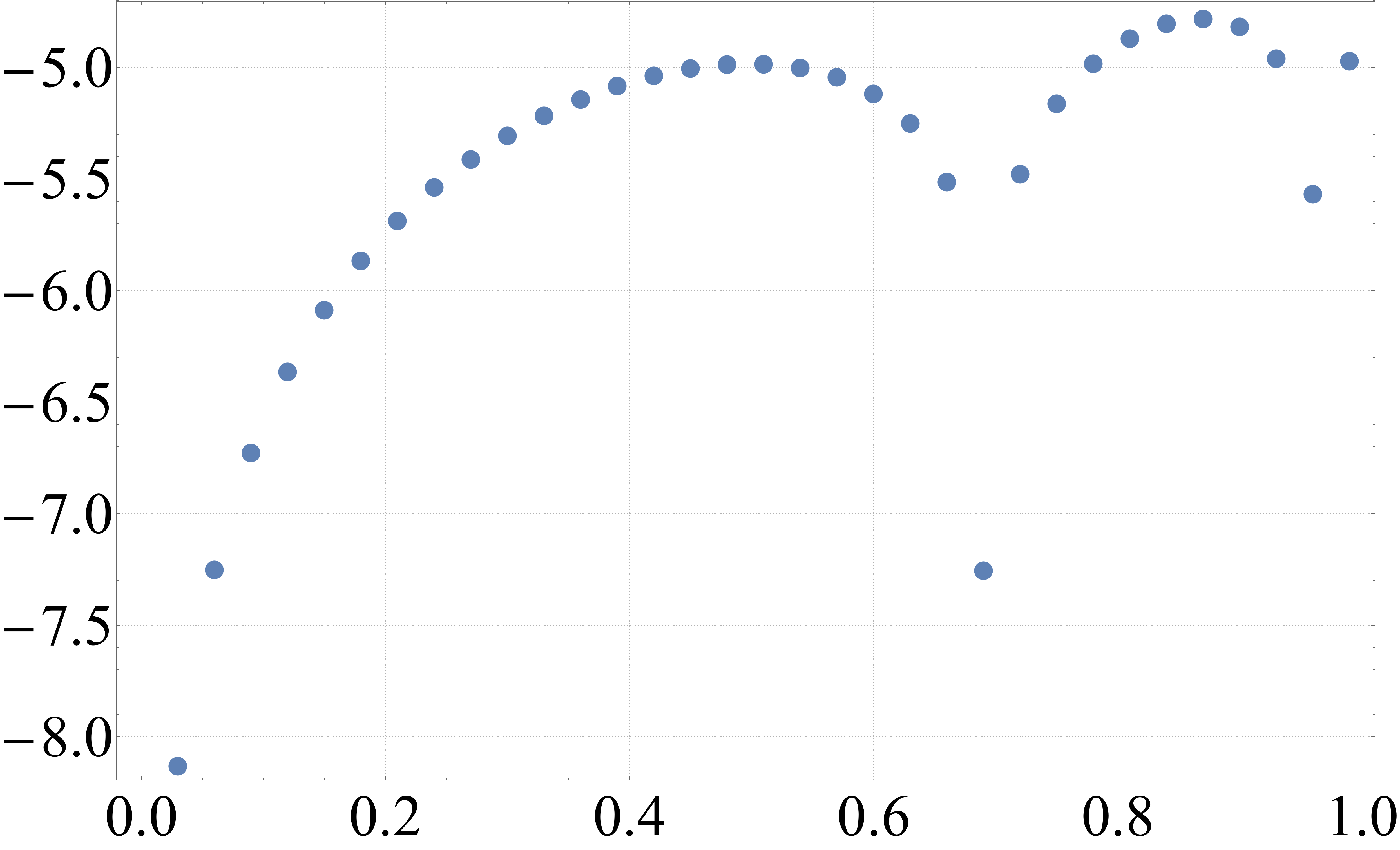}}
\vspace*{8pt}
\caption{Discrete plot of exact and approximate solutions (left) and $\operatorname{Er}$ (right) for $f\left(\chi\right) = \sin\left(\chi\right)$: generalized Burgers' equation}
\label{fig3}
\end{figure}

\begin{figure}[H]
\centerline{\includegraphics[width = 3in]{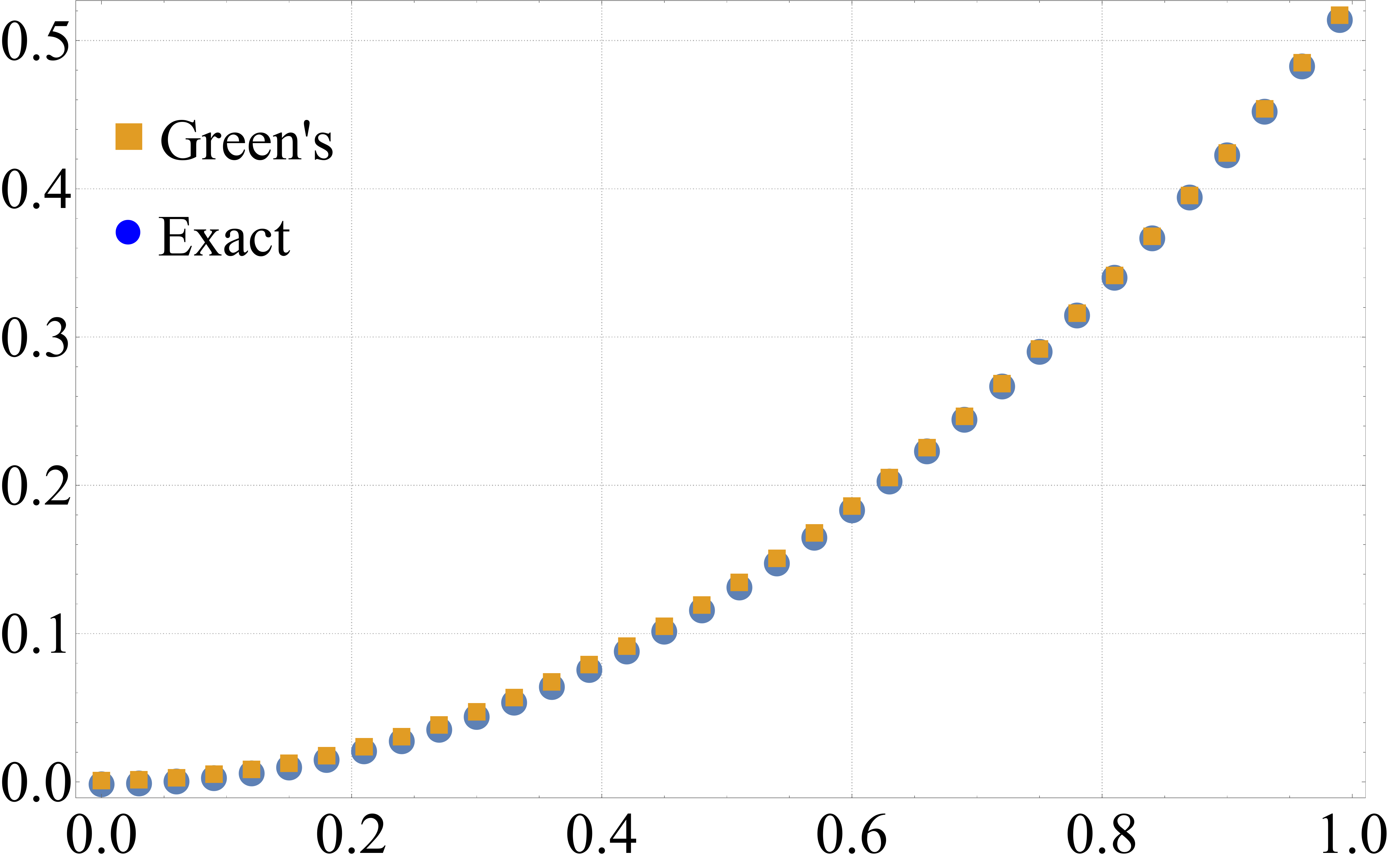} ~ \includegraphics[width=3in]{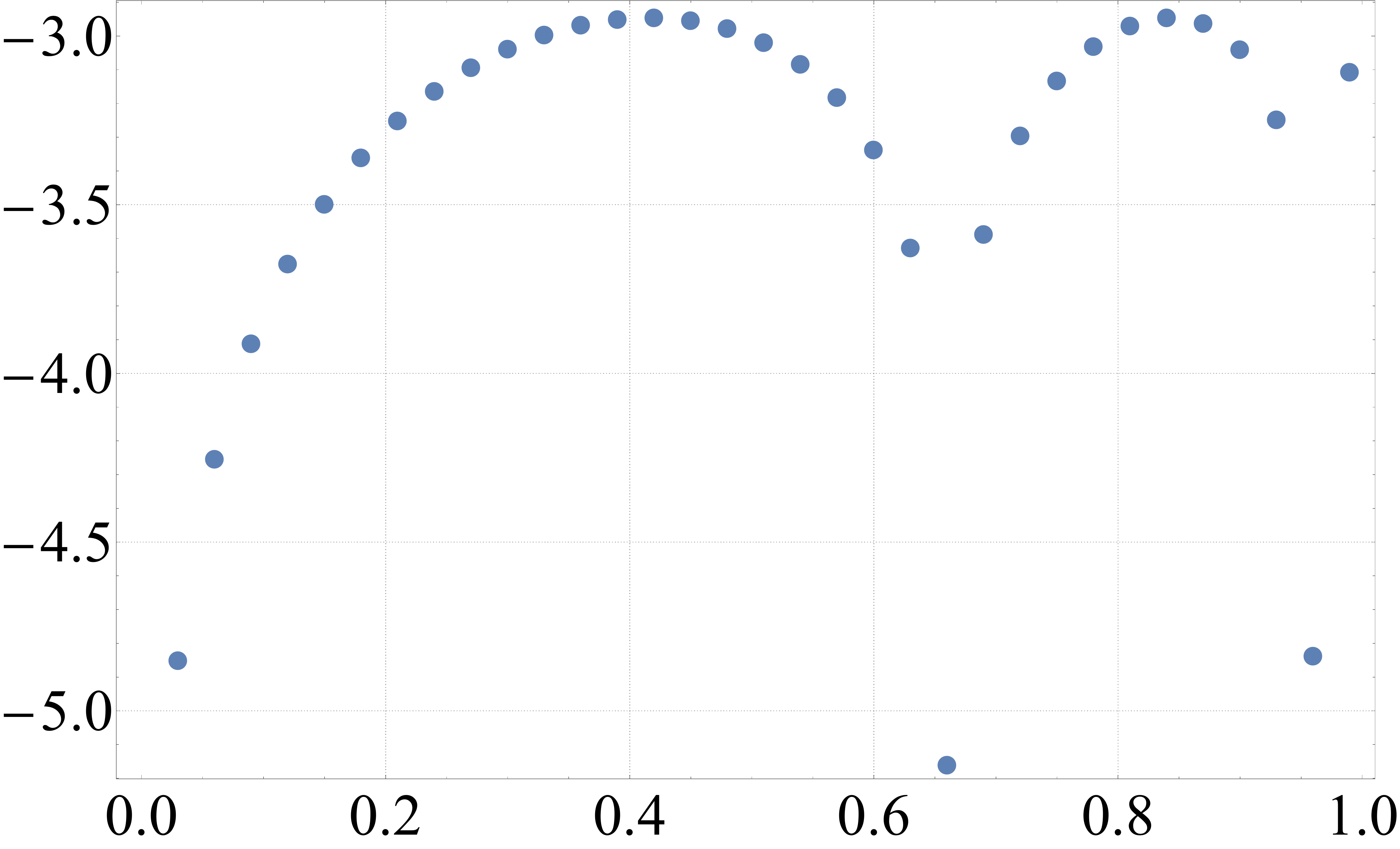}}
\vspace*{8pt}
\caption{Discrete plot of exact and approximate solutions (left) and $\operatorname{Er}$ (right) for $f\left(\chi\right) = \exp\left(\chi\right)$: generalized Burgers' equation}
\label{fig4}
\end{figure}

\begin{figure}[H]
\centerline{\includegraphics[width = 3in]{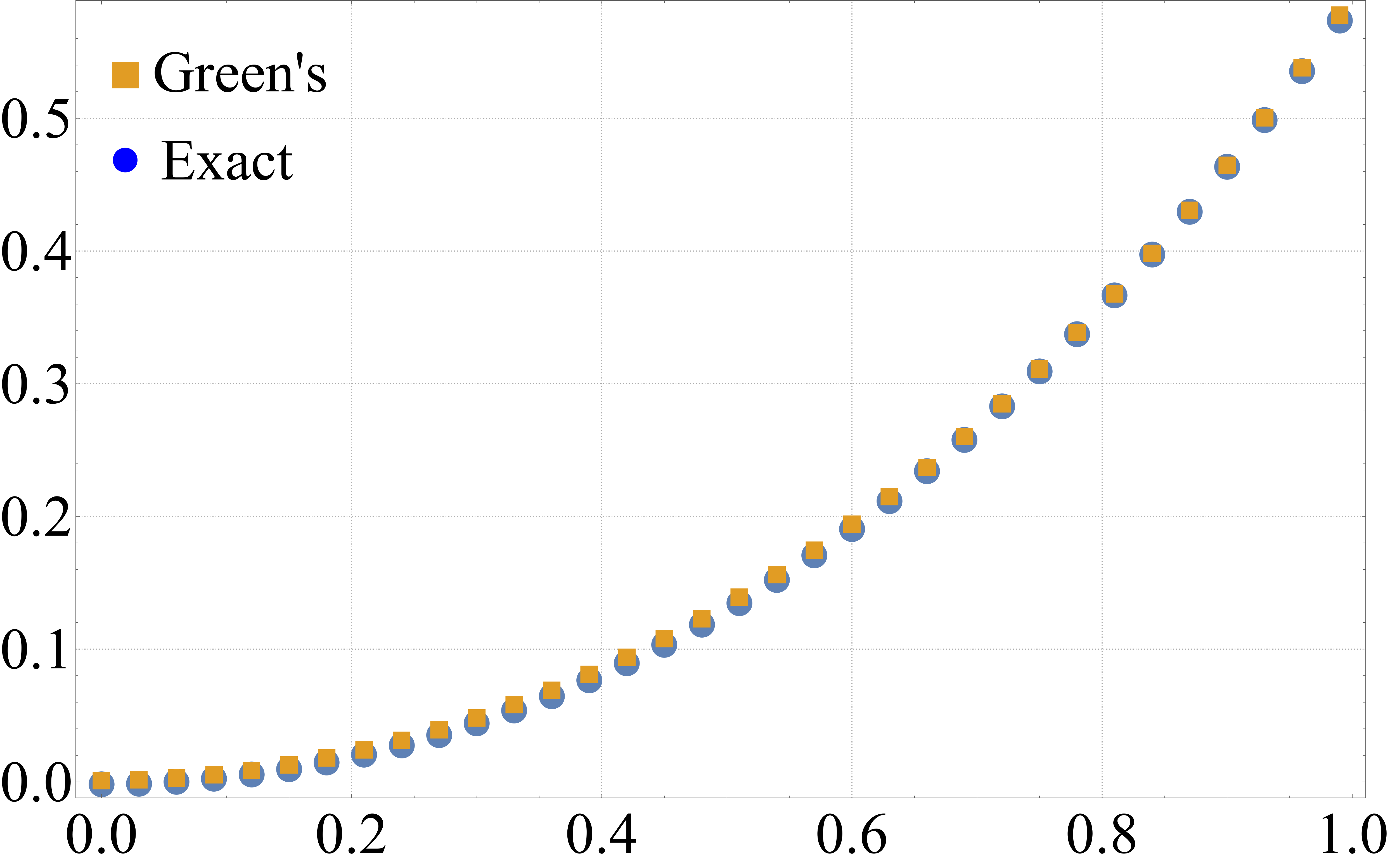} ~ \includegraphics[width=3in]{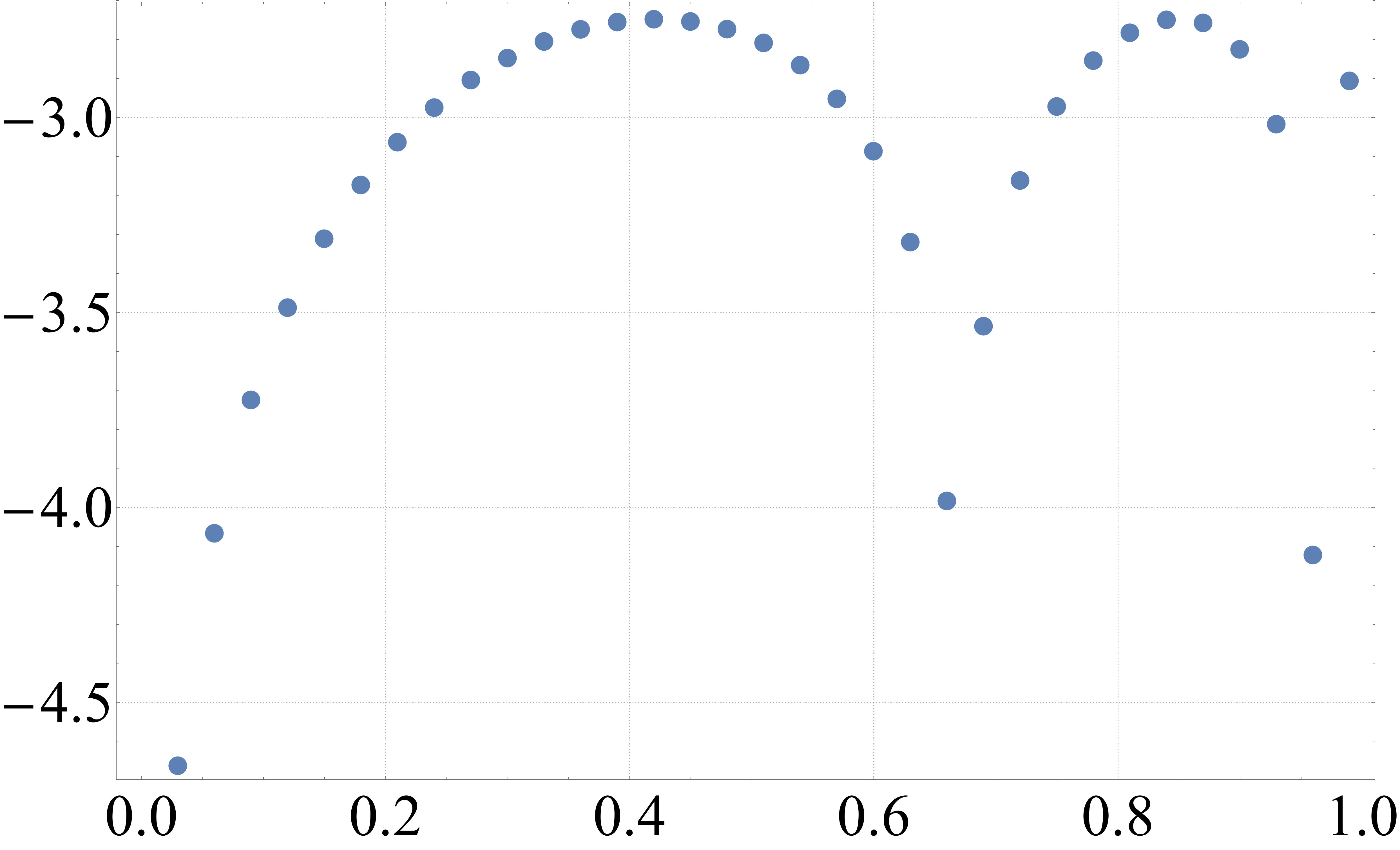}}
\vspace*{8pt}
\caption{Discrete plot of exact and approximate solutions (left) and $\operatorname{Er}$ (right) for $f\left(\chi\right) = 1 + \chi + \chi^2 + \chi^3$: generalized Burgers' equation}
\label{fig5}
\end{figure}

\begin{figure}[H]
\centerline{\includegraphics[width = 3in]{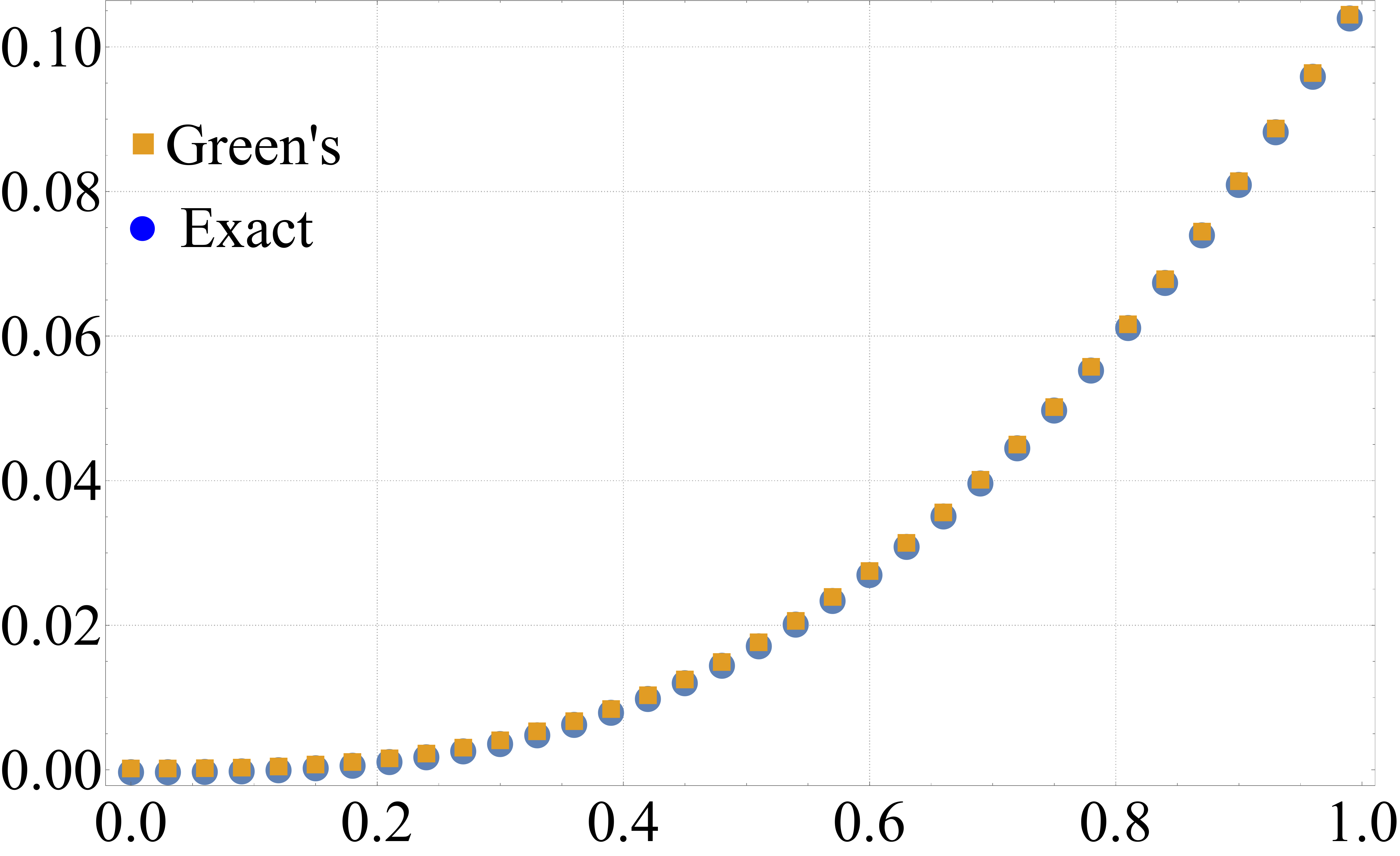} ~ \includegraphics[width=3in]{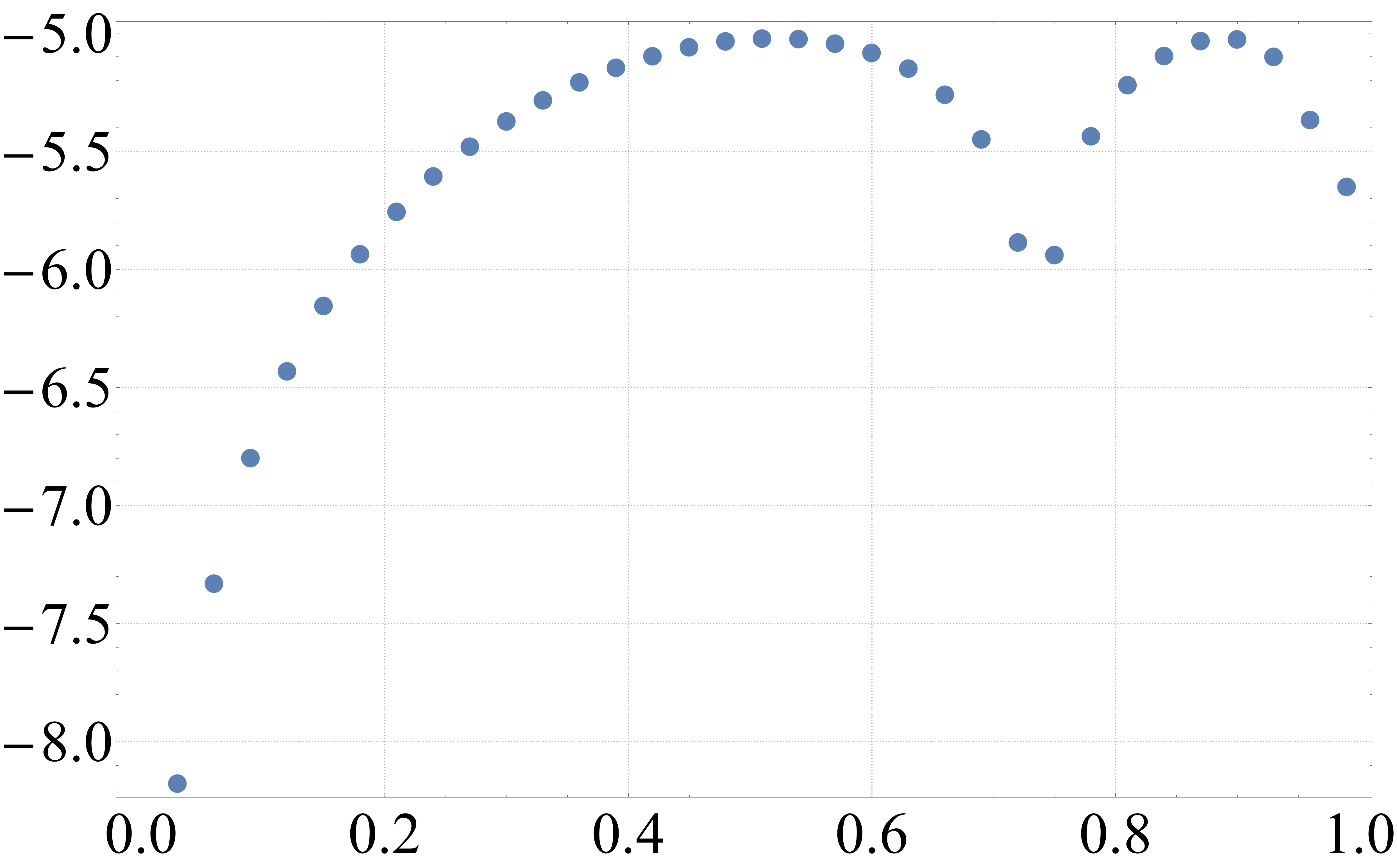}}
\vspace*{8pt}
\caption{Discrete plot of exact and approximate solutions (left) and $\operatorname{Er}$ (right) for $f\left(\chi\right) = \ln\left(1 + \chi\right)$: generalized Burgers' equation}
\label{fig6}
\end{figure}

\begin{table}[H]
\centering
\begin{tabular}{ c c c c c}
 $f\left(\chi\right)$ & $\min\operatorname{Er}$ & $\max\operatorname{Er}$ & $s_1$ & $s_2$ \\ \hline
 
 $\delta\left(\chi\right)$ & $-7$ & $-6.575$ & $2$ & $1$ \\ \hline
 
 $\theta\left(\chi\right)$ & $-6$ & $-3.5$ & $1.2179$ & $1.658$ \\ \hline
 
 $\sin\left(\chi\right)$ & $-5.1$ & $-4.8$ & $0.72126$ & $2.7776$ \\ \hline
 
 $\exp\left(\chi\right)$ & $-5.25$ & $-3$ & $0.712$ & $2.9$ \\ \hline
 
 $1 + \chi + \chi^2 + \chi^3$ & $-4.6$ & $-2.8$ & $0.5826$ & $3.6015$ \\ \hline

 $\ln\left(1 + \chi\right)$ & $-8.1$ & $-5$ & $3.10108$ & $0.6459$ \\ \hline 
\end{tabular}
\caption{Minimal and maximal logarithmic errors of approximation for various source functions: generalized Burgers' equation}
\label{tab2}
\end{table}

\section*{Conclusions}

New explicitly integrable cases for derivation of nonlinear Green's function are considered in this paper. It is shown that combining the modified NLG method with the generalized separation of variables or traveling wave {sl ansatz}, it is possible to approximate the solutions of nonlinear PDEs having special interest in various branches of physics. It is shown how the knowledge of the nonlinear Green's function allows to study the spectrum of the nonlinear operator. Several particular cases describing diffusion (generalized Burgers' equation), heat conduction and wave phenomena are studied in details. A numerical analysis is performed to quantify the logarithmic error between the exact and approximate solutions for particular non-linearities and various source functions. Minimizing the error with respect to the scale parameters, a low-error approximation is derived. It is concluded that the approach can be effectively applied for quantitative and qualitative analysis of various nonlinear processes described by second order PDEs. The NLG method proves in this way to grant both a closed form analytic solutions to non-homogeneous non-linear equations and, eventually, an algorithm to treat them numerically. This will grant a lot of possible applications in several field of physics and engineering.

\end{document}